\documentclass[lettersize,journal]{IEEEtran}
\IEEEoverridecommandlockouts

\usepackage{amssymb}
\usepackage[cmex10]{amsmath}
\usepackage{stfloats}
\usepackage{graphicx}
\usepackage{subfigure}
\usepackage{tabularx}
\usepackage{epsfig,epsf,color,balance,cite}
\usepackage{verbatim}
\usepackage{url}
\usepackage{bm}
\usepackage{booktabs}

\usepackage{amsmath}
\allowdisplaybreaks[2]
\usepackage{ntheorem}
\newtheorem{theorem}{Theorem}

\newtheorem*{proof}{Proof}

\usepackage{algorithm}
\usepackage{algorithmic}

\hyphenation{net-works}

\usepackage{color}
\definecolor{myc1}{rgb}{0,0,0}


\begin{document}

\title{Probabilistic Semantic Communication over Wireless Networks with Rate Splitting}

\author{Zhouxiang Zhao,~\IEEEmembership{Graduate Student Member,~IEEE,}
        Zhaohui Yang,~\IEEEmembership{Member,~IEEE,}
        Ye Hu,~\IEEEmembership{Member,~IEEE,}
        Qianqian Yang,~\IEEEmembership{Member,~IEEE,}
        Wei Xu,~\IEEEmembership{Senior Member,~IEEE,}
        and Zhaoyang Zhang,~\IEEEmembership{Senior Member,~IEEE}
\thanks{Z. Zhao, Z. Yang, Q. Yang and Z. Zhang are with College of Information Science and Electronic Engineering, Zhejiang University, and also with Zhejiang Provincial Key Laboratory of Info. Proc., Commun. \& Netw. (IPCAN), Hangzhou, China (e-mails: \{zhouxiangzhao, yang\_zhaohui, qianqianyang20, ning\_ming\}@zju.edu.cn).}
\thanks{Y. Hu is with Department of Industrial and Systems Engineering, University of Miami, Coral Gables, FL, 33146, USA (e-mail: yehu@miami.edu).}
\thanks{W. Xu is with National Mobile Communications Research Laboratory, Southeast University, Nanjing, China (e-mail: wxu@seu.edu.cn).}
}



\maketitle

\begin{abstract}
In this paper, the problem of joint transmission and computation resource allocation for probabilistic semantic communication (PSC) system with rate splitting multiple access (RSMA) is investigated. In the considered model, the base station (BS) needs to transmit a large amount of data to multiple users with RSMA. Due to limited communication resources, the BS is required to utilize semantic communication techniques to compress the large-sized data. The semantic communication is enabled by shared probability graphs between the BS and the users. The probability graph can be used to further compress the transmission data at the BS, while the received compressed semantic information can be recovered through using the same shared probability graph at each user side. The semantic information compression progress consumes additional computation power at the BS, which inevitably decreases the transmission power due to limited total power budget. Considering both the effect of semantic compression ratio and computation power, the semantic rate expression for RSMA is first obtained. Then, based on the obtained rate expression, an optimization problem is formulated with the aim of maximizing the sum of semantic rates of all users under total power, semantic compression ratio, and rate allocation constraints. To tackle this problem, an iterative algorithm is proposed, where the rate allocation and transmit beamforming design subproblem is solved using a successive convex approximation method, and the semantic compression ratio subproblem is addressed using a greedy algorithm. Numerical results validate the effectiveness of the proposed scheme.
\end{abstract}

\begin{IEEEkeywords}
Semantic communication, rate splitting multiple access, probability graph.
\end{IEEEkeywords}
\IEEEpeerreviewmaketitle

\section{Introduction}
\IEEEPARstart{D}{ue} to the rapidly increasing demand for emerging applications including metaverse, vehicle-to-everything (V2X), etc., existing wireless communication systems are facing the challenge of low latency and high energy efficiency. To address this challenge, researchers have proposed a new communication paradigm called semantic communication \cite{10024766}.
In the traditional communication system, the receiver wants to recover the same bits sent by the transmitter through decoding the received information.
Different from such bit-intended communication systems, semantic communication shares a same or similar knowledge base between the transmitter and the receiver, to allow resource-efficient small-sized semantic information transmission with the receiver recovering the semantic information with the help of knowledge base \cite{9679803,chaccour2022less,qin2021semantic,9955525,10233741}. 
Using such distinctive framework, semantic communication accrues notable advantages: high communication efficiency that is achieved by small-sized information transmission \cite{10032275}; enhanced security by requiring the prior information of the knowledge base \cite{yang2023secure}; universal communication with the capability of human to machine communication, as the semantic-level information is transmitted in the same understanding space \cite{9864327}. 
However, to fully explore the potential of semantic communication, it is of importance to investigate the multiple access schemes \cite{clerckx2024multiple}.
As for the semantic rate expression, there are two main directions of investigating semantic communication with multiple access, i.e., simulation-based semantic rate and analysis-based semantic rate. 

In simulation-based semantic rate, the semantic rate is obtained through data analytic method using the simulation results \cite{9763856,10375768,10073623,10355766}. 
In \cite{9763856}, the semantic spectral efficiency (S-SE) is defined for text data, and the joint resource allocation of semantic symbol length and channel assignment is optimized for the S-SE maximization. 
Using the model in \cite{9763856} and considering both conventional bit and semantic information transmission, a non-orthogonal multiple access (NOMA) assisted semantic communication system is studied through optimizing power allocation \cite{9953095,10347482}. 
Taking the quality of experience as performance indicator, the authors in \cite{10303275} optimize power allocation and common message allocation for rate splitting multiple access (RSMA) assisted semantic communication. 
Moreover, in \cite{10287956}, the authors optimized the sum semantic rate through optimizing the semantic information ratio, time duration, rate splitting, and transmit beamforming for ultra-reliability and low-latency communication scenarios.
Besides, a deep learning-based multiple access scheme is proposed with considering both joint source-channel coding and signal modulation schemes in \cite{10288558}.
Further in \cite{10122232}, the deep reinforcement learning based algorithm is proposed to solve the dynamic resource allocation problem for the semantic communication system. 
Through using the semantic relay, the semantic rate maximization problem is formulated through optimizing power allocation and channel assignment \cite{hu2023multiuser}. 
However, the performance of the above schemes is always limited by accuracy of the simulation-based semantic rate expression. 

In analysis-based semantic rate, the semantic rate is derived through modifying the Shannon capacity \cite{clerckx2024multiple} with probability theory. 
Using probability graph, the semantic rate expression is derived in \cite{zhao2023joint,10333452} with considering the trade-off between communication and computation.
Further, the joint communication and computation resource allocation framework is proposed for semantic assisted mobile edge computing systems in \cite{10287133}. 
However, the above works are limited to the orthogonal multiple access with semantic communication \cite{zhao2023joint,10333452,10287133}, even though RSMA can provide extra spectral efficiency \cite{9831440,10038476,mao2018rate}. 

Inspired by the above analysis, the main novelty of this paper is a probability graph-enabled multi-user probabilistic semantic communication (PSC) network with RSMA. The key contributions are summarized as follows:
\begin{itemize}
    \item We propose a PSC framework to allow high-rate large data transmission in communication resource constrained networks. In particular, the base station (BS) uses semantic communication techniques to compress and transmit the large-sized data under limited communication resource budget. To do so, the BS shares a probability graph with its served users, such that each served user can recover the original information from compressed ones. Meanwhile, the BS transmits the compressed information to multiple users with the RSMA scheme to fully exploit the performance gain of multiple antennas.
    \item To achieve the high overall semantic rate, a policy needs to be designed for the management of beamforming, rate splitting, and semantic compression with the limited communication and computation resources in the system. We formulate this joint transmission and computation resource allocation problem in an optimization setting, whose goal is to maximize the sum of semantic rate across all users while meeting the total power budget, semantic compression ratio, and rate allocation requirements.
    \item To address this problem, an iterative algorithm is proposed in which the rate allocation and transmit beamforming design subproblem as well as the semantic compression ratio subproblem are alternately optimized. The rate allocation and transmit beamforming design subproblem is solved using a successive convex approximation (SCA) method. The semantic compression ratio subproblem is addressed using a greedy algorithm, where the closed-form solution for the optimal semantic compression ratio is obtained at each iteration. Numerical results validate the effectiveness of the proposed algorithm.
\end{itemize}

The rest of this paper is organized as follows. The system model and problem formulation are described in Section \uppercase\expandafter{\romannumeral2}. The algorithm design is presented in Section \uppercase\expandafter{\romannumeral3}. Simulation results are analyzed in Section \uppercase\expandafter{\romannumeral4}. Conclusions are finally drawn in Section \uppercase\expandafter{\romannumeral5}.

\section{System Model and Problem Formulation}

\begin{figure}[t]
\centering
\includegraphics[width=0.9\linewidth]{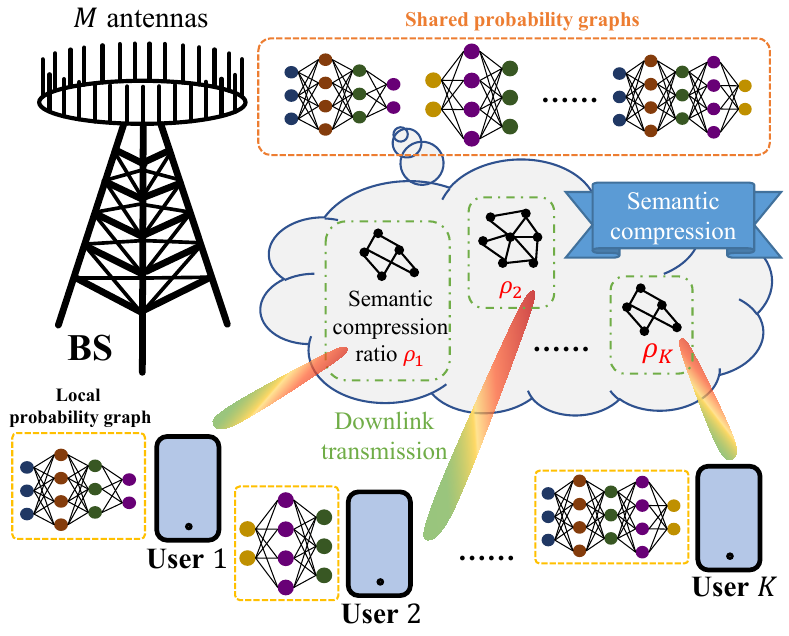}
\caption{Illustration of the considered PSC network.}
\label{fg:sm}
\end{figure}

Consider a downlink wireless PSC network with one multiple-antenna BS and $K$ single-antenna users, as shown in Fig.~\ref{fg:sm}. The BS is equipped with $M$ antennas, and the set of users is denoted by $\mathcal{K}$. The BS needs to transmit a large-sized data set $\mathcal{D}_k$ to each user $k$. Considering the wireless resource limitations, the BS extracts the small-sized semantic information $\mathcal{C}_k$ from the original data $\mathcal{D}_k$ to reduce the load of communication tasks. In other words, the BS in the PSC network determines how the large-sized data are compressed and transmitted to the distributed users, based on the network resource limitations, and user needs. Next, the probability graph enabled data compression framework and RSMA enabled data transmission solution are introduced, followed by the analysis on their transmission performance and energy consumption. Then, the problem formulation is given.

\subsection{Semantic Communication Model}

\begin{figure}[t]
\centering
\includegraphics[width=0.8\linewidth]{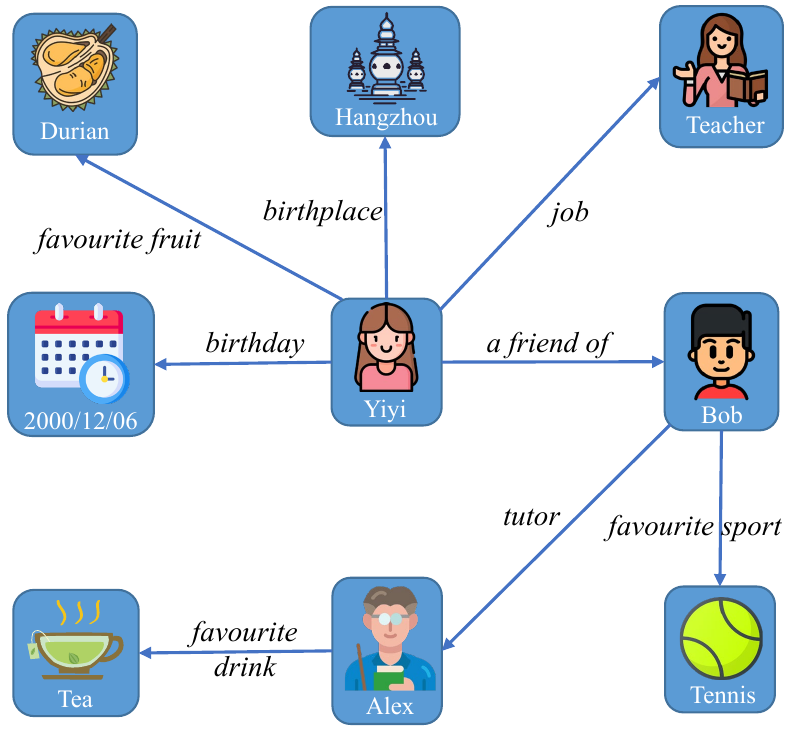}
\caption{An example of a conventional knowledge graph.}
\label{fg:kg}
\end{figure}

\begin{figure}[t]
\centering
\includegraphics[width=\linewidth]{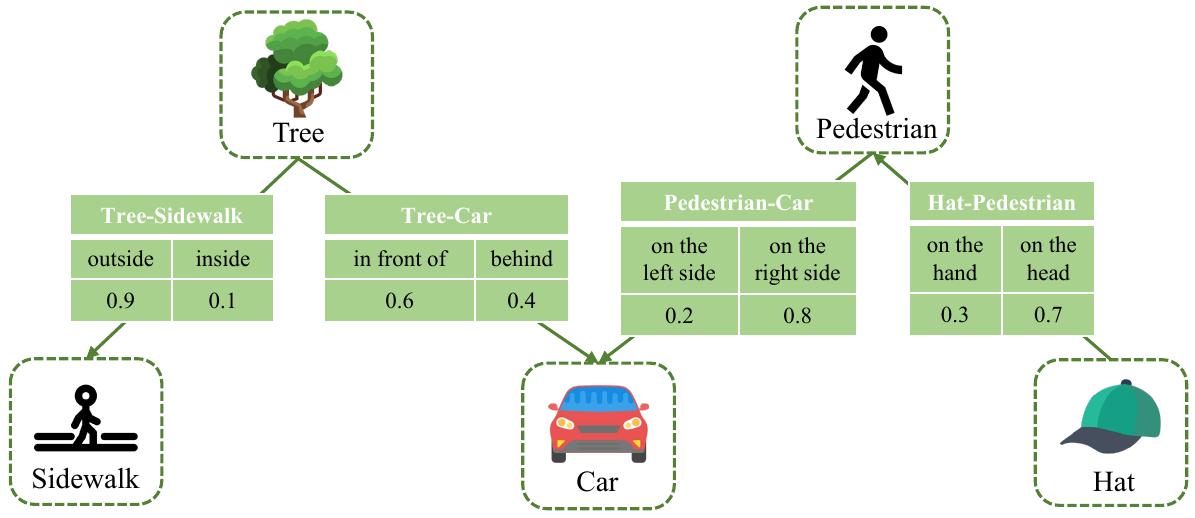}
\caption{Illustration of the probability graph considered in the PSC system.}
\label{fg:pg}
\end{figure}

This paper adopts probability graph as the underlying knowledge base between the BS and users. A probability graph summarizes statistical insights from diverse knowledge graphs \cite{9416312}, expanding the conventional triple knowledge graph, as shown in Fig.~\ref{fg:kg}, by introducing the dimension of relational probability. An illustrative example of a probability graph is depicted in Fig.~\ref{fg:pg}. A typical knowledge graph comprises numerous triples, each taking the form
\begin{equation}
    \varepsilon = (h, r, t),
\end{equation}
where $h$ denotes the head entity, $t$ represents the tail entity, and $r$ stands for the relation connecting $h$ and $t$. Unlike the determined relations in traditional knowledge graphs, relations in a probability graph are assigned with corresponding specific probabilities, which can represent the likelihood of a particular relation occurring under specific conditions with fixed head and tail entities.

In the PSC network, we assume that the BS possesses extensive knowledge graphs derived from textual/picture/video data via named entity recognition (NER) \cite{9039685} and relation extraction (RE) \cite{9446853}. Leveraging the shared probability graph between the BS and the users facilitates further compression of high-information-density knowledge graphs.

The probability graph extends the dimensionality of relations by statistically calculating occurrences of diverse relations linked to the same head and tail entities across various samples of knowledge graphs. Employing statistical insights from the probability graph, we can construct a multidimensional conditional probability matrix. This matrix encapsulates the likelihood of a specific triple being valid under the condition that other specified triples are valid, thereby facilitating the omission of certain relations in the knowledge graph and ensuing data compression. However, it is important to note that achieving a more compact data size requires a lower semantic compression ratio, which in turn requires a higher computation load, leading to a trade-off between communication and computation within the PSC network. The detailed implementation specifics of the probability graph can be found in \cite{10333452}.

Within the PSC network framework, each user shares its individualized local probability graph with the BS. Leveraging these shared probability graphs, the BS undertakes semantic information extraction, compressing the large-sized data $\mathcal{D}_k$ of each user based on their respective probability graphs with unique semantic compression ratio denoted by $\rho_k$, which is defined as the ratio of selected semantic information divided by the original information. Subsequently, the compressed data $\mathcal{C}_k$ is transmitted to user $k$ over a noisy channel. Upon receiving semantic data from the BS, each user conducts semantic inference, leveraging its local probability graph to reconstruct the compressed semantic information. For example, assuming that the probability graph in Fig.~\ref{fg:pg} is shared by the BS and the user, if the BS needs to transmit triple (\textit{Tree, outside, Sidewalk}) to the user, then according to the probability graph, it only needs to transmit (\textit{Tree, $\varnothing$, Sidewalk}) where the relation is omitted. After receiving the compressed information, the user can automatically recover the relation ``\textit{outside}'' based on the shared probability graph, because it is of higher probability.

\subsection{RSMA Model}
In RSMA, the framework involves the segmentation of messages designated for individual users into two distinct components: a common part and a private part. The common elements extracted from all users are aggregated to form a unified message referred to as the common message. This common message, denoted by $s_0$, undergoes encoding using a shared codebook applicable to all users. Consequently, all users perform decoding on the common message $s_0$. Simultaneously, the unique private part of each user, denoted by $s_k$, is subjected to encoding to generate an individualized private stream intended exclusively for that particular user $k$. Consequently, the transmitted signal $\mathbf{x}$ originating from the BS can be expressed as
\begin{equation}
    \mathbf{x}=\sum_{k=1}^K\mathbf{w}_k s_k+\mathbf{w}_0 s_0,
\end{equation}
where $\mathbf{w}_k\in \mathbb{C}^{M\times 1}$ represents the transmit beamforming vector associated with the private message $s_k$ intended for user $k$, and $\mathbf{w}_0\in \mathbb{C}^{M\times 1}$ denotes the transmit beamforming vector corresponding to the common message $s_0$. Note that both the private message $s_k$ and the common message $s_0$ are normalized.

For user $k$, the received signal is expressed as
\begin{equation}
    y_k=\mathbf{h}_k^\mathrm{H}\mathbf{x}+n=\sum_{i=1}^K\mathbf{h}_k^\mathrm{H}\mathbf{w}_i s_i+\mathbf{h}_k^\mathrm{H}\mathbf{w}_0 s_0 + n,
\end{equation}
where $\mathbf{h}_k\in \mathbb{C}^{M\times 1}$ represents the channel between the BS and user $k$, and $n\sim\mathcal{CN}(0,\sigma^2)$ is additive white Gaussian noise (AWGN) with an average noise power of $\sigma^2$.

In RSMA, all users are required to decode the common message. For user $k$, the rate for decoding common message $s_0$ can be given by
\begin{equation}
    r_k^\mathrm{c}=B\log_2\left(1+\frac{\left\vert \mathbf{h}^\mathrm{H}_k \mathbf{w}_0\right\vert^2}{\sum^K_{i=1}\left\vert \mathbf{h}^\mathrm{H}_k \mathbf{w}_i\right\vert^2+ \sigma^2}\right),
\end{equation}
where $B$ is the bandwidth of the BS.

Following the decoding of the common message $s_0$, it is reconstructed and subtracted from the original received signal. Subsequently, each user can decode its private message. For user $k$, the rate for decoding its own private message $s_k$ can be given by
\begin{equation}
    r_k^\mathrm{p}=B\log_2\left(1+\frac{\left\vert \mathbf{h}^\mathrm{H}_k \mathbf{w}_k\right\vert^2}{\sum^K_{i=1,i\ne k}\left\vert \mathbf{h}^\mathrm{H}_k \mathbf{w}_i\right\vert^2+ \sigma^2}\right).
\end{equation}

It is noteworthy that the common message $s_0$ comprises common parts extracted from messages intended for various users. Assuming the rate allocated to user $k$ in the common message is $a_k$, the following constraint must hold:
\begin{equation}\label{cc}
    \sum_{i=1}^K a_i\leq r_k^\mathrm{c}, \forall k \in \mathcal{K}.
\end{equation}
Denote the minimum common message decoding rate among all users by
\begin{equation}
    r_0^\mathrm{c}=\min_{k\in\mathcal{K}}r_k^\mathrm{c}.
\end{equation}
Then, constraint \eqref{cc} can be rewritten as
\begin{equation}
    \sum_{i=1}^K a_i\leq r_0^\mathrm{c}.
\end{equation}
This constraint ensures that all users can successfully decode the common message.

\subsection{Transmission Model}
In the considered PSC network, data undergoes compression with a semantic compression ratio before transmission. The compression ratio for user $k$ is defined as
\begin{equation}
    \rho_k=\frac{\mathrm{size}(\mathcal{C}_k)}{\mathrm{size}(\mathcal{D}_k)},
\end{equation}
where the function $\mathrm{size}(\cdot)$ quantifies the data size in terms of bits.

Consequently, we can obtain a semantic rate for user $k$, denoted by
\begin{equation}
    c_k = \frac{1}{\rho_k}(a_k+r_k^\mathrm{p}),
\end{equation}
which represents the actual transmission rate of the semantic information. The reason of multiplying the conventional transmission rate by $1/\rho_k$ is that one bit of obtained semantic information can convey more than one bit in the original information.

\subsection{Computation Model}
In the considered PSC network, semantic information extraction is executed by the BS based on shared probability graphs to compress the original data $\mathcal{D}_k$ into a smaller-sized data $\mathcal{C}_k$. This process is inherently tied to computational resources, with a crucial observation being that higher compression ratios $\rho_k$ correspond to decreased computation load.

As detailed in \cite{10333452}, the computation load for the probability graph-based PSC network can be described by
\begin{equation}\label{cl}
    f\left(\rho\right)=\left\{\begin{array}{l}
        A_1\rho +B_1, C_1< \rho \leq 1, \\
        A_2\rho +B_2, C_2< \rho \leq C_1, \\
        \vdots \\
        A_D\rho +B_D, C_D\leq \rho \leq C_{D-1}.
    \end{array}\right.,
\end{equation}
where $\{A_1, A_2, \cdots, A_D\} < 0$ denote the slopes, $\{B_1, B_2, \cdots, B_D\} > 0$ represent constant terms ensuring the continue connection of each segment, 
and $\{C_1, C_2, \cdots, C_D\}$ are the boundaries for each segment. These parameters are intricately linked to the characteristics of the involved probability graphs. The segmented structure of the computation load function $f(\rho)$, as described in \eqref{cl}, arises due to the nature of semantic inference involving multiple levels of conditional probabilities. As the semantic compression ratio increases, the higher-level semantic probabilities are engaged, necessitating more computational resources.

\begin{figure}[t]
\centering
\includegraphics[width=\linewidth]{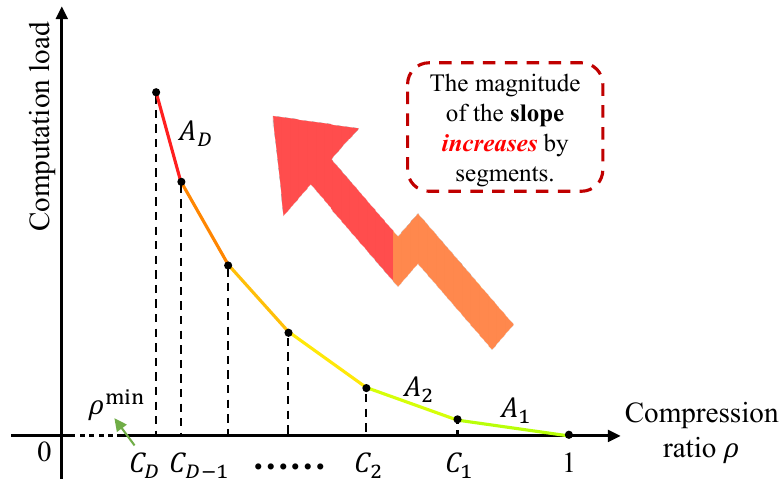}
\caption{Illustration of the computation load versus the compression ratio $\rho$.}
\label{fg:cl}
\end{figure}

The computation load function, denoted by $f(\rho)$, exhibits a segmented architecture with $D$ levels, and the magnitude of the slope escalates in distinct segments, as portrayed in Fig.~\ref{fg:cl}. This behavior stems from the fact that, at higher semantic compression ratios, only low-dimensional conditional probabilities are utilized, resulting in diminished computational requirements. However, as the semantic compression ratio decreases, surpassing particular thresholds like $C_d$, the demand for higher-dimensional information emerges. With increased information dimensions, the computation load intensifies. Each transition within the segmented function $f(\rho)$ signifies the incorporation of probabilistic information with a higher dimension for semantic information extraction.

Given the computation load function $f(\rho)$, the computation power required for extracting semantic information for user $k$ is expressed as
\begin{equation}
    p_k^\mathrm{c} = f(\rho_k)p_0,
\end{equation}
where $p_0$ denotes a positive constant representing the computation power coefficient.

\begin{figure*}[t]
\centering
\includegraphics[width=\linewidth]{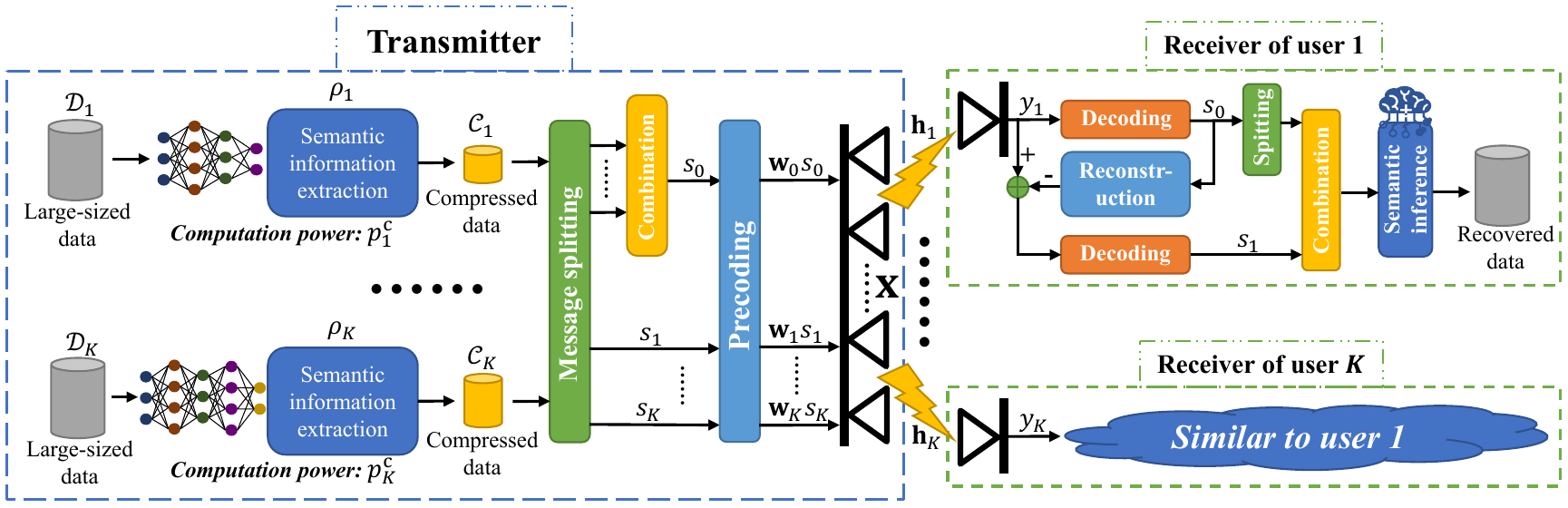}
\caption{The overall framework of the considered PSC network with RSMA.}
\label{fg:fw}
\end{figure*}

The overall framework of the considered PSC network with RSMA is depicted in Fig.~\ref{fg:fw}.

\subsection{Problem Formulation}
Based on the considered system model, our aim is to maximize the sum of semantic rates across all users by jointly optimizing the semantic compression ratio $\rho_k$, rate allocation $a_k$ for each user, and the transmit beamforming $\mathbf{w}_k$ of the BS. This optimization is subject to several constraints, including the maximum total power limit of the BS, rate allocation constraints,  semantic compression constraints, and minimum acceptable semantic rate requirement for each user. The formulated problem for maximizing the sum semantic rate is expressed as
\begin{subequations}\label{pf}
    \begin{align}
        \max_{\mathbf{a},\mathbf{W},\bm{\rho}} \quad & \sum_{k=1}^K c_k, \tag{\ref{pf}}\\
        \textrm{s.t.} \quad & \sum_{k=0}^K \left\vert\mathbf{w}_k\right\vert^2+\sum_{k=1}^K p_k^\mathrm{c}\leq P^\mathrm{max},\label{c1}\\
        & \sum_{k=1}^K a_k\leq r_0^\mathrm{c}, \label{c2}\\
        & a_k\geq 0, \forall k\in\mathcal{K}, \label{c3}\\
        & \rho_{k}^\mathrm{min}\leq\rho_k\leq 1,\forall k\in\mathcal{K},\label{c4}\\
        & c_k\geq c_k^\mathrm{min},\forall k\in\mathcal{K},\label{c5}
    \end{align}
\end{subequations}
where $\mathbf{a}=[a_1,a_2,\cdots,a_K]^\mathrm{T}$, $\mathbf{W}=[\mathbf{w}_0;\mathbf{w}_1;\mathbf{w}_2;\cdots;\mathbf{w}_K]$, $\bm{\rho}=[\rho_1,\rho_2,\cdots,\rho_K]^\mathrm{T}$, $P^\mathrm{max}$ is the maximum power limit of the BS, $\rho_{k}^\mathrm{min}$ is the semantic compression limit for user $k$, and $c_k^\mathrm{min}$ is the minimum acceptable semantic rate of user $k$. Constraint \eqref{c1} imposes a limit on the sum of transmit power and computation power for the BS to stay within the overall power limit $P^\mathrm{max}$. Constraint \eqref{c2} ensures that all users can successfully decode the common message. Constraint \eqref{c3} enforces the non-negativity of the rate allocation. Constraint \eqref{c4} bounds the semantic compression ratio for each user. Lastly, constraint \eqref{c5} ensures that all users attain an acceptable semantic rate.

It is essential to recognize that all these three variables to be optimized are coupled together in problem \eqref{pf}. Furthermore, the objective function of problem \eqref{pf} poses a highly non-convex nature. Another salient feature of problem \eqref{pf} is the inclusion of the segmented function $f(\rho)$, introducing non-smoothness and presenting a distinct challenge to the optimization process. Given the highly non-convex and non-smooth nature of problem \eqref{pf}, obtaining the optimal solution using existing optimization tools in polynomial time is generally challenging. Consequently, we present a suboptimal solution in the following section.

\section{Algorithm Design}
In this section, an iterative algorithm is introduced to address problem \eqref{pf} by optimizing two subproblems: the rate allocation and transmit beamforming design subproblem, and the semantic compression ratio subproblem. These subproblems are alternately optimized within the iterative scheme, progressively converging towards a suboptimal solution for problem \eqref{pf}.

\subsection{Rate Allocation and Transmit Beamforming Design}
With given semantic compression ratio, problem \eqref{pf} can be simplified as
\begin{subequations}\label{pfs1}
    \begin{align}
        \max_{\mathbf{a},\mathbf{W}} \quad & \sum_{k=1}^K \frac{1}{\rho_k}\Bigg[a_k+\notag\\
        & \qquad B\log_2\left(1+\frac{\left\vert \mathbf{h}^\mathrm{H}_k \mathbf{w}_k\right\vert^2}{\sum^K_{i=1,i\ne k}\left\vert \mathbf{h}^\mathrm{H}_k \mathbf{w}_i\right\vert^2+ \sigma^2}\right)\Bigg], \tag{\ref{pfs1}}\\
        \textrm{s.t.} \quad & \sum_{k=0}^K \left\vert\mathbf{w}_k\right\vert^2+\sum_{k=1}^K p_k^\mathrm{c}\leq P^\mathrm{max},\label{s1c1}\\
        & \sum_{k=1}^K a_k\leq r_0^\mathrm{c}, \label{s1c2}\\
        & a_k\geq 0, \forall k\in\mathcal{K}, \label{s1c3}\\
        & \frac{1}{\rho_k}\left[a_k+B\log_2\left(1+\frac{\left\vert \mathbf{h}^\mathrm{H}_k \mathbf{w}_k\right\vert^2}{\sum^K_{i=1,i\ne k}\left\vert \mathbf{h}^\mathrm{H}_k \mathbf{w}_i\right\vert^2+ \sigma^2}\right)\right]\notag\\
        & \hspace{1.6in}\geq c_k^\mathrm{min},\forall k\in\mathcal{K}.\label{s1c4}
    \end{align}
\end{subequations}

Here, given that the semantic compression ratio for each user is fixed, the total computation power for semantic compression, expressed as
\begin{equation}
    \sum_{k=1}^K p_k^\mathrm{c}=p_0\sum_{k=1}^K f(\rho_k),
\end{equation}
becomes a constant. For ease of notation, we denote the total computation power as $P^\mathrm{comp}$, i.e., $\sum_{k=1}^K p_k^\mathrm{c}=P^\mathrm{comp}$.

Problem \eqref{pfs1} is non-convex due to the non-convex nature of the objective function and constraint \eqref{s1c4}. 
To handle the non-convexity of the objective function \eqref{pfs1},
we introduce new variables, namely $\hat{r_k^\mathrm{c}}$ and $\hat{r_k^\mathrm{p}}$. Consequently, problem \eqref{pfs1} can be reformulated as
\begin{subequations}\label{pfs2}
    \begin{align}
        \max_{\mathbf{a},\mathbf{W},\hat{\mathbf{r}^\mathrm{c}},\hat{\mathbf{r}^\mathrm{p}}} \quad & \sum_{k=1}^K \frac{1}{\rho_k}\left(a_k+\hat{r_k^\mathrm{p}}\right), \tag{\ref{pfs2}}\\
        \textrm{s.t.} \quad & \sum_{k=0}^K \left\vert\mathbf{w}_k\right\vert^2\leq P^\mathrm{max}-P^\mathrm{comp},\label{s2c1}\\
        & \sum_{k=1}^K a_k\leq \hat{r_k^\mathrm{c}}, \forall k\in\mathcal{K}, \label{s2c2}\\
        & a_k\geq 0, \forall k\in\mathcal{K}, \label{s2c3}\\
        & \frac{1}{\rho_k}\left(a_k+\hat{r_k^\mathrm{p}}\right) \geq c_k^\mathrm{min},\forall k\in\mathcal{K},\label{s2c4}\\
        & \hat{r_k^\mathrm{c}} \leq B\log_2\left(1+\frac{\left\vert \mathbf{h}^\mathrm{H}_k \mathbf{w}_0\right\vert^2}{\sum^K_{i=1}\left\vert \mathbf{h}^\mathrm{H}_k \mathbf{w}_i\right\vert^2+ \sigma^2}\right),\notag\\ &\hspace{1.9in}\forall k\in\mathcal{K},\label{s2c5}\\
        & \hat{r_k^\mathrm{p}} \leq B\log_2\left(1+\frac{\left\vert \mathbf{h}^\mathrm{H}_k \mathbf{w}_k\right\vert^2}{\sum^K_{i=1,i\ne k}\left\vert \mathbf{h}^\mathrm{H}_k \mathbf{w}_i\right\vert^2+ \sigma^2}\right),\notag\\ &\hspace{1.9in} \forall k\in\mathcal{K},\label{s2c6}
    \end{align}
\end{subequations}
where $\hat{\mathbf{r}^\mathrm{c}}=\left[\hat{r_1^\mathrm{c}},\hat{r_2^\mathrm{c}},\cdots,\hat{r_K^\mathrm{c}}\right]^\mathrm{T}$ and $\hat{\mathbf{r}^\mathrm{p}}=\left[\hat{r_1^\mathrm{p}},\hat{r_2^\mathrm{p}},\cdots,\hat{r_K^\mathrm{p}}\right]^\mathrm{T}$. The objective function of problem \eqref{pfs2} is now convex. Nevertheless, the non-convexity of constraints \eqref{s2c5} and \eqref{s2c6} renders problem \eqref{pfs2} still non-convex. To address the non-convexity of these constraints, we employ a SCA method to approximate constraints \eqref{s2c5} and \eqref{s2c6}.


For constraint \eqref{s2c6}, we can transform it into
\begin{align}
    \hat{r_k^\mathrm{p}} \leq B\log_2\left(1+\gamma_k\right),\label{sca1}\\
    \left\vert \mathbf{h}^\mathrm{H}_k \mathbf{w}_k\right\vert^2 \geq \gamma_k\alpha_k,\label{sca2}\\
    \sum^K_{i=1,i\ne k}\left\vert \mathbf{h}^\mathrm{H}_k \mathbf{w}_i\right\vert^2+ \sigma^2 \leq \alpha_k,\label{sca3}
\end{align}
where $\gamma_k$ and $\alpha_k$ are non-negative slack variables.

Regarding constraint \eqref{sca2}, note that $\mathbf{h}^\mathrm{H}_k \mathbf{w}_k$ can always be a real number by adjusting the phase of the beamforming vector $\mathbf{w}_k$ \cite{9461768}. Therefore, constraint \eqref{sca2} can be rewritten as
\begin{equation}\label{ga}
    \mathcal{R}\left(\mathbf{h}^\mathrm{H}_k \mathbf{w}_k\right) \geq \sqrt{\gamma_k\alpha_k},
\end{equation}
where $\mathcal{R}\left(c\right)$ represents the real part of complex number $c$. For constraint \eqref{ga}, the left hand side becomes convex. For the right hand side, we use the first-order Taylor series to approximate it, which can be given by
\begin{multline}
    \sqrt{\gamma_k\alpha_k} \approx \sqrt{\gamma_k^{(i)}\alpha_k^{(i)}} + \frac{1}{2}\sqrt{\frac{\gamma_k^{(i)}}{\alpha_k^{(i)}}}\left(\alpha_k-\alpha_k^{(i)}\right)\\
    + \frac{1}{2}\sqrt{\frac{\alpha_k^{(i)}}{\gamma_k^{(i)}}}\left(\gamma_k-\gamma_k^{(i)}\right),
\end{multline}
where $i$ denotes the iteration index. Consequently, constraint \eqref{sca2} can be approximated as
\begin{multline}\label{eq1}
    \mathcal{R}\left(\mathbf{h}^\mathrm{H}_k \mathbf{w}_k\right) \geq \sqrt{\gamma_k^{(i)}\alpha_k^{(i)}}\\
    + \frac{1}{2}\sqrt{\frac{\gamma_k^{(i)}}{\alpha_k^{(i)}}}\left(\alpha_k-\alpha_k^{(i)}\right) + \frac{1}{2}\sqrt{\frac{\alpha_k^{(i)}}{\gamma_k^{(i)}}}\left(\gamma_k-\gamma_k^{(i)}\right),
\end{multline}
which is now convex.

For constraint \eqref{s2c5}, we can also transform it into
\begin{align}
    \hat{r_k^\mathrm{c}} \leq B\log_2\left(1+\delta_k\right),\label{251}\\
    \left\vert \mathbf{h}^\mathrm{H}_k \mathbf{w}_0\right\vert^2 \geq \delta_k\beta_k,\label{252}\\
    \sum^K_{i=1}\left\vert \mathbf{h}^\mathrm{H}_k \mathbf{w}_i\right\vert^2+ \sigma^2 \leq \beta_k,\label{253}
\end{align}
where $\delta_k$ and $\beta_k$ are non-negative slack variables.

Concerning constraint \eqref{252}, note that $\mathbf{h}^\mathrm{H}_k \mathbf{w}_0$ cannot always be a real number by changing the phase of the beamforming vector $\mathbf{w}_0$ since there are multiple $\mathbf{h}_k$ and only one $\mathbf{w}_0$. To handle the non-convexity of constraint \eqref{252}, we first rewrite it as
\begin{equation}\label{2522}
    \left\vert \mathbf{h}^\mathrm{H}_k \mathbf{w}_0\right\vert^2 \geq \frac{1}{4}\left[\left(\delta_k+\beta_k\right)^2-\left(\delta_k-\beta_k\right)^2\right],
\end{equation}
which is equivalent to constraint \eqref{252}.

Then, we apply the first-order Taylor series to both sides of constraint \eqref{2522} to approximate it. Consequently, constraint \eqref{2522} can be reformulated as
\begin{multline}\label{eq2}
    2\mathcal{R}\left(\mathbf{h}^\mathrm{H}_k\mathbf{w}_0^{(i)}\mathbf{h}^\mathrm{H}_k\mathbf{w}_0\right)-\left\vert \mathbf{h}^\mathrm{H}_k \mathbf{w}_0^{(i)}\right\vert^2\\
    \geq \frac{1}{4}\bigg[\left(\delta_k+\beta_k\right)^2-\left(\delta_k^{(i)}-\beta_k^{(i)}\right)\left(\delta_k-\beta_k\right)\\
    +\left(\delta_k^{(i)}-\beta_k^{(i)}\right)^2\bigg],
\end{multline}
which is now convex.

With the aforementioned approximations, the non-convex problem \eqref{pfs1} can be approximated by the following convex problem, formulated as
\begin{subequations}\label{pfs3}
    \begin{align}
        \max_{\mathbf{a},\mathbf{W},\bm{\alpha},\bm{\beta},\bm{\gamma},\bm{\delta}} \quad & \sum_{k=1}^K \frac{1}{\rho_k}\left[a_k+B\log_2\left(1+\gamma_k\right)\right], \tag{\ref{pfs3}}\\
        \textrm{s.t.} \quad & \sum_{k=0}^K \left\vert\mathbf{w}_k\right\vert^2\leq P^\mathrm{max}-P^\mathrm{comp},\label{s3c1}\\
        & \sum_{k=1}^K a_k\leq B\log_2\left(1+\delta_k\right), \forall k\in\mathcal{K}, \label{s3c2}\\
        & \frac{1}{\rho_k}\left[a_k+B\log_2\left(1+\gamma_k\right)\right] \geq c_k^\mathrm{min},\forall k\in\mathcal{K},\label{s3c3}\\
        & \eqref{sca3}, \eqref{eq1}, \eqref{253}, \eqref{eq2}, \label{s3c4}\\
        & a_k, \alpha_k, \beta_k, \gamma_k, \delta_k \geq 0, \forall k\in\mathcal{K}, \label{s3c5}
    \end{align}
\end{subequations}
where $\bm{\alpha}=[\alpha_1,\alpha_2,\cdots,\alpha_K]^\mathrm{T}$, $\bm{\beta}=[\beta_1,\beta_2,\cdots,\beta_K]^\mathrm{T}$, $\bm{\gamma}=[\gamma_1,\gamma_2,\cdots,\gamma_K]^\mathrm{T}$, and $\bm{\delta}=[\delta_1,$ $\delta_2,\cdots,\delta_K]^\mathrm{T}$. Problem \eqref{pfs3} is a convex problem that can be efficiently solved using existing optimization tools.

The algorithm for solving the rate allocation and transmit beamforming design subproblem is summarized in Algorithm~\ref{algo1}.

\begin{algorithm}[ht]
\caption{Rate Allocation and Transmit Beamforming Design Optimization with SCA}\label{algo1}
\begin{algorithmic}[1]
    \STATE Initialize $\mathbf{W}^{(0)}$, $\bm{\alpha}^{(0)}$, $\bm{\beta}^{(0)}$, $\bm{\gamma}^{(0)}$ and $\bm{\delta}^{(0)}$. Set iteration index $i=0$.
    \REPEAT
        \STATE Solve convex problem \eqref{pfs3}.
        \STATE Set $i=i+1$.
        \STATE Update $\mathbf{W}^{(i)}$, $\bm{\alpha}^{(i)}$, $\bm{\beta}^{(i)}$, $\bm{\gamma}^{(i)}$ and $\bm{\delta}^{(i)}$ with the optimal solution of problem \eqref{pfs3}.
    \UNTIL{the objective value of problem \eqref{pfs3} converges.}
    \STATE \textbf{Output}: The optimized rate allocation $\mathbf{a}$ and transmit beamforming $\mathbf{W}$.
\end{algorithmic}
\end{algorithm}

\subsection{Semantic Compression Ratio}
With given rate allocation and transmit beamforming design, problem \eqref{pf} can be simplified as
\begin{subequations}\label{pfs4}
    \begin{align}
        \max_{\bm{\rho}} \quad & \sum_{k=1}^K \frac{1}{\rho_k}\left(a_k+r_k^\mathrm{p}\right), \tag{\ref{pfs4}}\\
        \textrm{s.t.} \quad & p_0\sum_{k=1}^K f(\rho_k)\leq P^\mathrm{max}-\sum_{k=0}^K \left\vert\mathbf{w}_k\right\vert^2,\label{s4c1}\\
        & \rho_{k}^\mathrm{min}\leq\rho_k\leq 1,\forall k\in\mathcal{K},\label{s4c2}\\
        & \frac{1}{\rho_k}\left(a_k+r_k^\mathrm{p}\right)\geq c_k^\mathrm{min},\forall k\in\mathcal{K}.\label{s4c3}
    \end{align}
\end{subequations}
Here, $\left(a_k+r_k^\mathrm{p}\right)$ and $\sum_{k=0}^K \left\vert\mathbf{w}_k\right\vert^2$ are constants since the rate allocation and transmit beamforming are given.

The difficulty in solving problem \eqref{pfs4} arises from the non-smoothness of the computation load function $f(\rho)$. To tackle this difficulty, we introduce auxiliary variable $\theta_{kd}\in \{0,1\}$ to represent the linear segment level of user $k$. If $\theta_{kd} =1$, the computation load function $f(\rho_k)$ falls in the $d$-th linear segment for user $k$, i.e., $f(\rho_k)=A_{d}\rho_k+B_{d}$. Otherwise, we have $\theta_{kd} =0$.
Based on the introduce auxiliary variable $\theta_{kd}$,
we reformulate the computation load function as
\begin{equation}
    f(\rho_k)=\sum_{d=1}^{D} \theta_{kd}(A_{d}\rho_k+B_{d}),
\end{equation}
where $D$ is the number of segments of the piecewise function $f(\rho_k)$, and $\theta_{kd}$ identifies the specific segment within which $\rho_k$ falls.
Since each user only has one linear segment level, we have 
\begin{equation}
    \sum_{d=1}^{D}\theta_{kd}=1.
\end{equation}

Therefore, problem \eqref{pfs4} can be rewrite as
\begin{subequations}\label{pfs5}
    \begin{align}
        \max_{\bm{\Theta},\bm{\rho}} \quad & \sum_{k=1}^K \frac{1}{\rho_k}\left(a_k+r_k^\mathrm{p}\right), \tag{\ref{pfs5}}\\
        \textrm{s.t.} \quad & \sum_{k=1}^K \sum_{d=1}^{D} \theta_{kd}(A_{d}\rho_k+B_{d}) \leq \frac{P^\mathrm{max}-\sum_{k=0}^K \left\vert\mathbf{w}_k\right\vert^2}{p_0}, \label{s5c1}\\
        & \rho_{k}^\mathrm{min}\leq\rho_k\leq 1, \forall k\in\mathcal{K},\label{s5c2}\\
        & \rho_k\leq \frac{a_k+r_k^\mathrm{p}}{c_k^\mathrm{min}}, \forall k\in\mathcal{K},\label{s5c3}\\
        & \theta_{kd}\in \{0,1\}, \forall k\in\mathcal{K}, \label{s5c4}\\
        & \sum_{d=1}^{D}\theta_{kd}=1, \forall k\in\mathcal{K}, \label{s5c5}
    \end{align}
\end{subequations}
where $\bm{\Theta}=[\bm{\theta}_1,\bm{\theta}_2,\cdots,\bm{\theta}_K]$ and $\bm{\theta}_k=[\theta_{k1},\theta_{k2},\cdots,\theta_{kD}]^\mathrm{T}$.

In problem \eqref{pfs5}, the matrix $\bm{\Theta}$ is constructed with integer elements from the set $\{0,1\}$, while the vector $\bm{\rho}$ comprises continuous values. Consequently, addressing problem \eqref{pfs5} poses a big challenge due to its mixed-integer programming nature.

It is crucial to emphasize the inherent interdependence between $\bm{\Theta}$ and $\bm{\rho}$. The determination of $\bm{\rho}$ corresponds directly to the specification of $\bm{\Theta}$. Conversely, a fixed $\bm{\Theta}$ does not uniquely determine $\bm{\rho}$; instead, it serves to restrict the feasible range of $\bm{\rho}$ by identifying the specific segment within which $\bm{\rho}$ lies.

As a result, our approach commences by approximating the semantic compression ratio $\bm{\rho}$ through the determination of $\bm{\Theta}$ in the following manner.

For ease of notation, we define
\begin{equation}\label{midrho}
    \rho_{d}=\frac{C_{d-1}+C_{d}}{2},1\leq d\leq D,
\end{equation}
where ${C_{1},C_{2},\cdots,C_{D}}$ denote the boundaries for each segment in function $f(\rho_k)$, and we set $C_{0}=1$.

The value of $\rho_{d}$, being a fixed quantity, serves as a representative of the midpoint within each segment $d$ of the function $f(\rho_k)$. Utilizing this approximation, problem \eqref{pfs5} can be simplified as
\begin{subequations}\label{pfs6}
    \begin{align}
        \max_{\bm{\Theta}} \quad & \sum_{k=1}^K \frac{a_k+r_k^\mathrm{p}}{\sum_{d=1}^{D}\theta_{kd}\rho_{d}}, \tag{\ref{pfs6}}\\
        \textrm{s.t.} \quad & \sum_{k=1}^K \sum_{d=1}^{D} \theta_{kd}(A_{d}\rho_{d}+B_{d}) \leq \frac{P^\mathrm{max}-\sum_{k=0}^K \left\vert\mathbf{w}_k\right\vert^2}{p_0}, \label{s6c1}\\
        & \sum_{d=1}^{D}\theta_{kd}\rho_{d}\leq \frac{a_k+r_k^\mathrm{p}}{c_k^\mathrm{min}}, \forall k\in\mathcal{K},\label{s6c2}\\
        & \theta_{kd}\in \{0,1\}, \forall k\in\mathcal{K}, \label{s6c3}\\
        & \sum_{d=1}^{D}\theta_{kd}=1, \forall k\in\mathcal{K}. \label{s6c4}
    \end{align}
\end{subequations}
Problem \eqref{pfs6} is a zero-one integer programming problem which can be solved using existing optimization tools.

Upon solving problem \eqref{pfs6}, the specific segment in which the semantic compression ratio of each user resides is determined. This designated segment for user $k$ is denoted by $D_k$ and satisfies
\begin{equation}
    f(\rho_k)=A_{D_k}\rho_k+B_{D_k},C_{D_k}\leq\rho_k\leq C_{D_k-1}.
\end{equation}
With the identification of the segment for $\rho_k$, the computation load function $f(\rho_k)$ transforms from a non-smooth piecewise structure into a conventional linear function.

Therefore, the problem we need to solve becomes
\begin{subequations}\label{pfs7}
    \begin{align}
        \max_{\bm{\rho}} \quad & \sum_{k=1}^K \frac{1}{\rho_k}\left(a_k+r_k^\mathrm{p}\right), \tag{\ref{pfs7}}\\
        \textrm{s.t.} \quad & \sum_{k=1}^K A_{D_k}\rho_k+B_{D_k} \leq \frac{P^\mathrm{max}-\sum_{k=0}^K \left\vert\mathbf{w}_k\right\vert^2}{p_0}, \label{s7c1}\\
        & \rho_k\leq \frac{a_k+r_k^\mathrm{p}}{c_k^\mathrm{min}}, \forall k\in\mathcal{K},\label{s7c2}\\
        & C_{D_k}\leq\rho_k\leq C_{D_k-1}, \forall k\in\mathcal{K}. \label{s7c3}
    \end{align}
\end{subequations}
Next, we employ a greedy algorithm to solve problem \eqref{pfs7}.

For convenience, we define
\begin{equation}
    R_k=a_k+r_k^\mathrm{p},
\end{equation}
which is a constant as the rate allocation and transmit beamforming is given.

Subsequently, we arrange the values of $R_k$ in descending order for each user, creating a set denoted as
\begin{equation}
    \mathcal{S}=\{R_1,R_2,\cdots,R_k,\cdots,R_K\}.
\end{equation}
Here, user $1$ holds the highest value, user $k$ ranks $k$-th in magnitude, and user $K$ possesses the lowest value.

Next, we initialize the semantic compression ratio for each user by
\begin{equation}\label{rho0}
    \rho_k=\min\left\{\frac{a_k+r_k^\mathrm{p}}{c_k^\mathrm{min}},C_{D_k-1}\right\}, \forall k\in\mathcal{K},
\end{equation}
which is the lower bound of $\rho_k$ in accordance with constraints \eqref{s7c2} and \eqref{s7c3}.

Then, we can obtain the following theorem.
\begin{theorem}\label{theorem1}
The optimal semantic compression ratio for user $k$ of problem \eqref{pfs7} in the greedy algorithm is
\begin{equation}\label{rhok}
    \rho_k=\left\{\begin{array}{ll}
        C_{D_k}, &\text{if } p_0(A_{D_k}C_{D_k}+B_{D_k})\leq P_k,\\
        \frac{P_k/p_0-B_{D_k}}{A_{D_k}}, &\text{otherwise},
    \end{array}\right.,
\end{equation}
where $P_k$ is the computation power available to user $k$.
\end{theorem}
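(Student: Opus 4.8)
The plan is to exploit the structure of problem~\eqref{pfs7} once the segment index $D_k$ is fixed for every user. The objective $\sum_k R_k/\rho_k$ is strictly decreasing in each $\rho_k$ (since $R_k>0$), so to maximize it we want each $\rho_k$ as small as possible. The only obstacles to shrinking $\rho_k$ are the lower bound $\rho_k\geq C_{D_k}$ from \eqref{s7c3} and the coupling through the total computation power constraint \eqref{s7c1}: because the slopes $A_{D_k}<0$, a smaller $\rho_k$ consumes \emph{more} computation power, so \eqref{s7c1} caps how far down we may push. The greedy algorithm processes users in descending order of $R_k$, and at the step for user $k$ it has a residual computation-power budget which, after accounting for the (current) allocations to all other users, leaves an amount $P_k$ available to user $k$; the claim is that the greedy-optimal choice for $\rho_k$ is then given by \eqref{rhok}.

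First I would make the per-user reduction explicit. Holding all $\rho_j$, $j\neq k$, fixed, constraint \eqref{s7c1} becomes $A_{D_k}\rho_k+B_{D_k}\leq P_k/p_0$, i.e.\ (dividing by the negative $A_{D_k}$ and noting $B_{D_k}>0$, $P_k>0$) a \emph{lower} bound $\rho_k\geq (P_k/p_0-B_{D_k})/A_{D_k}$. Combined with \eqref{s7c3} and \eqref{s7c2}, and using that the initialization \eqref{rho0} already sets $\rho_k$ at its upper feasible limit $\min\{R_k/c_k^{\min},C_{D_k-1}\}$ so that \eqref{s7c2} and the upper part of \eqref{s7c3} are automatically respected while we decrease $\rho_k$, the feasible interval for $\rho_k$ at this step is
\[
\rho_k \;\in\; \Bigl[\max\Bigl\{\,C_{D_k},\ \tfrac{P_k/p_0-B_{D_k}}{A_{D_k}}\Bigr\},\ \ \min\Bigl\{\tfrac{R_k}{c_k^{\min}},C_{D_k-1}\Bigr\}\Bigr].
\]
Second, since the contribution $R_k/\rho_k$ to the objective is monotonically decreasing in $\rho_k$, the optimal choice is the left endpoint of this interval, which is exactly the case split in \eqref{rhok}: if $C_{D_k}$ is the binding lower bound — equivalently $p_0(A_{D_k}C_{D_k}+B_{D_k})\leq P_k$, since $A_{D_k}C_{D_k}+B_{D_k}=f$ evaluated at $C_{D_k}$ is the computation load there and $p_0$ times it must not exceed $P_k$ — then $\rho_k=C_{D_k}$; otherwise the power constraint binds first and $\rho_k=(P_k/p_0-B_{D_k})/A_{D_k}$.

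The main obstacle — and the part that deserves the most care — is not the one-user calculation but arguing that the \emph{descending-$R_k$ greedy order is the right one}, i.e.\ that the locally optimal choice \eqref{rhok} at each step is consistent with global optimality of the greedy pass and that the residual budget $P_k$ handed to user $k$ is well-defined. The intuition is an exchange argument: because every user's marginal gain from receiving an extra unit of ``compression slack'' is governed by $R_k/\rho_k^2$ (the derivative of $R_k/\rho_k$), spending the scarce computation budget first on users with the largest $R_k$ yields the largest marginal decrease in the objective; any budget reallocation away from a higher-$R_k$ user to a lower-$R_k$ user can be shown not to increase $\sum_k R_k/\rho_k$. I would formalize this by noting that once the segments $D_k$ are fixed, \eqref{pfs7} is a separable convex program (convex objective $\sum R_k/\rho_k$ over a polytope), so KKT conditions apply; the greedy order with the closed-form update \eqref{rhok} is precisely the procedure that saturates the lower bounds in the KKT-prescribed priority, and checking that it produces a point satisfying stationarity, primal feasibility, and complementary slackness for \eqref{s7c1} completes the argument. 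Verifying that this greedy saturation never violates \eqref{s7c2} or the upper bound in \eqref{s7c3} — which holds because we only ever \emph{decrease} $\rho_k$ from the initialization \eqref{rho0} — is the final routine check.
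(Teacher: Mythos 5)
Your proposal is correct and follows essentially the same route as the paper's own proof: the objective is strictly decreasing in each $\rho_k$, so user $k$'s ratio is pushed to the smallest value permitted jointly by the segment lower bound $C_{D_k}$ and the residual-power lower bound $(P_k/p_0-B_{D_k})/A_{D_k}$, which yields exactly the case split in \eqref{rhok}. Your added discussion of the feasible interval and of justifying the descending-$R_k$ order (which the paper only asserts heuristically) is a welcome refinement but not a different method.
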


\begin{proof}
For user $k$, the partial derivative of the objective function of problem \eqref{pfs7}, denoted by $O(\bm{\rho})$, with respect to its semantic compression ratio, $\rho_k$, is
\begin{equation}
    \frac{\partial O(\bm{\rho})}{\partial \rho_k}=\frac{-\left(a_k+r_k^\mathrm{p}\right)}{\rho_k^2},
\end{equation}
which is always negative since the denominator is always positive and the numerator is always negative. Thus, when the semantic compression ratio of the other $k-1$ users are given, the lower the $\rho_k$, the higher the $O(\bm{\rho})$.

Therefore, in our greedy algorithm, we allocate as much computation power as possible to strong users to minimize their semantic compression ratio. This strategy is adopted because assigning a lower semantic compression ratio to a strong user contributes more significantly to $O(\bm{\rho})$ than to a weak user.

For user $k$, the semantic compression ratios of users stronger than it have already been assigned. Consequently, the remaining computation power allocable to user $k$ can be computed as
\begin{equation}\label{pk}
    P_k=P^\mathrm{max}-\sum_{k=0}^K \left\vert\mathbf{w}_k\right\vert^2-p_0\sum_{i=1,i\ne k}^K A_{D_i}\rho_i+B_{D_i}.
\end{equation}

Then, if the computation power available to user $k$, $P_k$, is greater than the highest power demand in the specific segment of $\rho_k$, i.e., $P_k > p_0(A_{D_k}C_{D_k}+B_{D_k})$, then the semantic compression ratio for user $k$ can be given by
\begin{equation}\label{p1}
    \rho_k=C_{D_k}.
\end{equation}

Otherwise, if $P_k$ is less than the highest power demand in the specific segment of $\rho_k$, i.e. $P_k\leq p_0\left[A_{D_k}C_{D_k}+B_{D_k}\right]$, then the semantic compression ratio for user $k$ can be given by
\begin{equation}\label{p2}
    \rho_k=\frac{P_k/p_0-B_{D_k}}{A_{D_k}}.
\end{equation}

Based on \eqref{p1} and \eqref{p2}, the semantic compression ratio of user $k$ can be determined.
$\hfill\square$
\end{proof}

After determining the semantic compression ratio of user $k$, we move on to the next user $k+1$ until we reach the weakest user $K$. Algorithm \ref{algo2} provides a summary of the greedy algorithm.

\begin{algorithm}[ht]
\caption{Greedy Algorithm for Semantic Compression Ratio Optimization}\label{algo2}
\begin{algorithmic}[1]
    \STATE Initialize $\bm{\rho}$ according to \eqref{rho0}.
    \FOR{$k=1:K$}
        \STATE Calculate $P_k$ according to \eqref{pk}.
        \STATE Update $\rho_k$ according to \eqref{rhok}.
    \ENDFOR
    \STATE \textbf{Output}: The optimized semantic compression ratio $\bm{\rho}$.
\end{algorithmic}
\end{algorithm}

\subsection{Algorithm Analysis}
The overall joint transmission and computation resource allocation algorithm for the considered PSC network with RSMA is outlined in Algorithm \ref{algo3}. The intricacy of solving problem \eqref{pf} resides in addressing the two subproblems at each iteration. For the rate allocation and transmit beamforming design subproblem, it is solved by using the SCA method, where the approximated convex problem \eqref{pfs3} is solved in each iteration. The computational complexity of obtaining the solution for problem \eqref{pfs3} is $\mathcal{O}(U_1^2 U_2)$ \cite{lobo1998applications}, where $U_1=5K+(K+1)M$ denotes the total number of variables, and $U_2=11K+1$ is the total number of constraints. Consequently, the overall computational complexity for solving the rate allocation and transmit beamforming design subproblem is $\mathcal{O}(T K^3 M^2)$, where $T$ is the number of iterations for the SCA method. Assuming the algorithm accuracy is denoted by $\epsilon$, the number of iterations for the SCA method is $\mathcal{O}(\sqrt{K}\log_2(1/\epsilon))$. Thus, the total complexity for solving this subproblem is expressed as $\mathcal{O}(K^{3.5} M^2 \log_2(1/\epsilon))$. As for the semantic compression ratio subproblem, its solution is achieved with significantly lower computational complexity compared to the first subproblem, rendering it negligible. Consequently, the overall complexity of Algorithm \ref{algo3} is denoted as $\mathcal{O}(E K^{3.5} M^2 \log_2(1/\epsilon))$, where $E$ represents the number of outer iterations of Algorithm \ref{algo3}. Consequently, the proposed optimization algorithm can be efficiently computed with polynomial complexity.

\begin{algorithm}[ht]
\caption{Joint Transmission and Computation Resource Allocation for PSC Network with RSMA}\label{algo3}
\begin{algorithmic}[1]
    \STATE Initialize $\mathbf{a}^{(0)}$, $\mathbf{W}^{(0)}$, and $\bm{\rho}^{(0)}$. Set iteration index $j=0$.
    \REPEAT
        \STATE With given $\bm{\rho}^{(j)}$, solve the rate allocation and transmit beamforming design subproblem and obtain the solution $\mathbf{a}^{(j+1)}$ and $\mathbf{W}^{(j+1)}$.
        \STATE With given $\mathbf{a}^{(j+1)}$ and $\mathbf{W}^{(j+1)}$, solve the semantic compression ratio subproblem and obtain the solution $\bm{\rho}^{(j+1)}$.
        \STATE Set $j=j+1$.
    \UNTIL{the objective value of problem \eqref{pf} converges.}
    \STATE \textbf{Output}: The optimized $\mathbf{a}$, $\mathbf{W}$ and $\bm{\rho}$.
\end{algorithmic}
\end{algorithm}

Considering the formidable challenge of theoretically establishing the optimality of problem \eqref{pf}, obtaining a globally optimal solution would demand exponential computational complexity, which is impractical. Therefore, we introduce Algorithm \ref{algo3} to furnish a suboptimal solution for problem \eqref{pf} within polynomial computational complexity.

\section{Simulation Results}
In this section, the performance of our proposed PSC network with RSMA is evaluated. In the simulation, there are 4 users, and the BS is equipped with 8 antennas. The bandwidth of the BS is 10 MHz. Moreover, the computation power coefficient is set to 1, and the maximum power limit of the BS is set to 30 dBm. For the considered semantic compression task, we adopt the same parameters as in \cite{zhao2023joint}. The main system parameters are listed in Table \ref{tb1}.

\begin{table}[ht]
\centering
\caption{Main System Parameters}
\begin{tabular}{|c||c||c|}
    \toprule\hline
    \textbf{Parameter} & \textbf{Symbol}  & \textbf{Value} \\
    \hline
    Number of users & $K$ & 4 \\ \hline
    Number of antennas & $M$ & 8 \\ \hline
    Bandwidth of the BS & $B$ & 10 MHz \\ \hline
    Computation power coefficient & $p_0$ & 1 \\ \hline
    Maximum power limit & $P^\mathrm{max}$ & 30 dBm \\
    \hline\bottomrule
\end{tabular}
\label{tb1}
\end{table}

To enable comparison, we consider two benchmarks: the `PSC-SDMA' scheme, which uses space division multiple access (SDMA) in the PSC network, and the `Non-semantic' scheme, which does not employ semantic communication techniques and allocates all power to transmission. The proposed scheme is labelled as `PSC-RSMA'.

\begin{figure}[t]
\centering
\includegraphics[width=\linewidth]{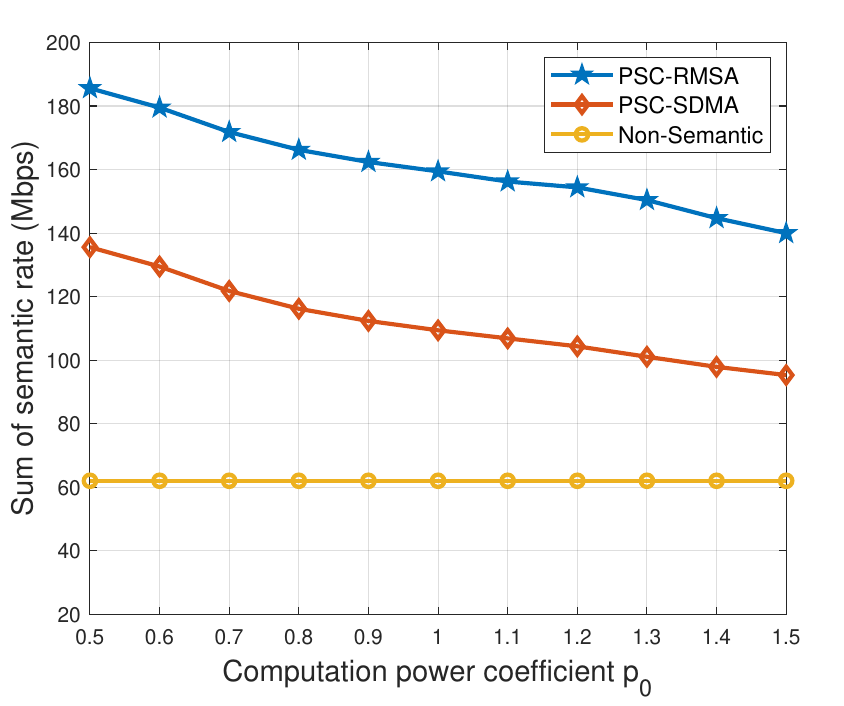}
\caption{Sum of semantic rate vs. computation power coefficient.}
\label{fg:p0}
\end{figure}

Fig.~\ref{fg:p0} displays the relationship between the sum of semantic rate and computation power coefficient. It is evident from the figure that the sum of semantic rate decreases as the computation power coefficient $p_0$ increases for both `PSC' schemes. This is due to the fact that a higher computation power coefficient results in a lower semantic compression ratio for the same computation power consumption, leading to a lower semantic rate. Additionally, the semantic rate of the `PSC-RSMA' scheme is higher than that of the `PSC-SDMA' scheme, indicating the advantage of using common and private rate-splitting in RSMA. Furthermore, the semantic rate of the `Non-semantic' scheme remains constant for different computation power coefficients and is consistently lower than the two `PSC' schemes. This is because it utilizes all the power for transmission and does not perform any semantic computation.

\begin{figure}[t]
\centering
\includegraphics[width=\linewidth]{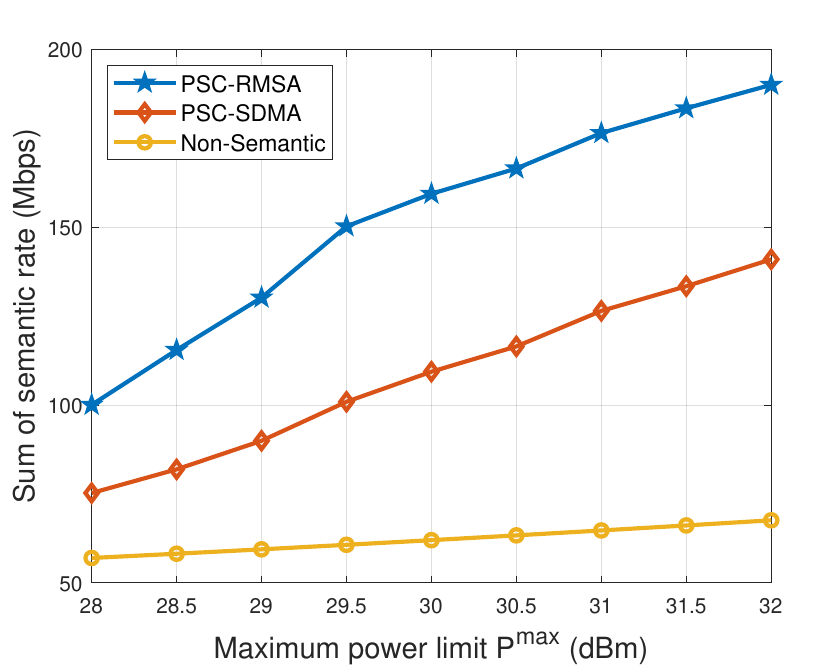}
\caption{Sum of semantic rate vs. maximum power limit.}
\label{fg:pmax}
\end{figure}

Fig.~\ref{fg:pmax} shows the relationship between the maximum power limit of the BS and the sum of semantic rate. As shown in the figure, the sum of semantic rate increases as the maximum power limit $P^\mathrm{max}$ of the BS increases. It is worth noting that the growth rate of the `Non-semantic' scheme is smaller compared to the two `PSC' schemes. The reason for this observation is that when the power budget is sufficient, the `PSC' schemes have the capability to allocate more power to semantic compression, which significantly contributes to the improvement of the semantic rate compared to allocating power solely to transmission. On the other hand, in the `Non-semantic' scheme, allocating all power to transmission not only increases the desired signal power but also introduces more interference between different channels. This interference negatively impacts the overall semantic rate improvement. Hence, the growth rate of the `Non-semantic' scheme is relatively small compared to the `PSC' schemes due to the detrimental effect of increased interference resulting from increased transmission power. In contrast, semantic compression is not affected by interference, leading to a higher growth rate for the `PSC' schemes.

\begin{figure}[t]
\centering
\includegraphics[width=\linewidth]{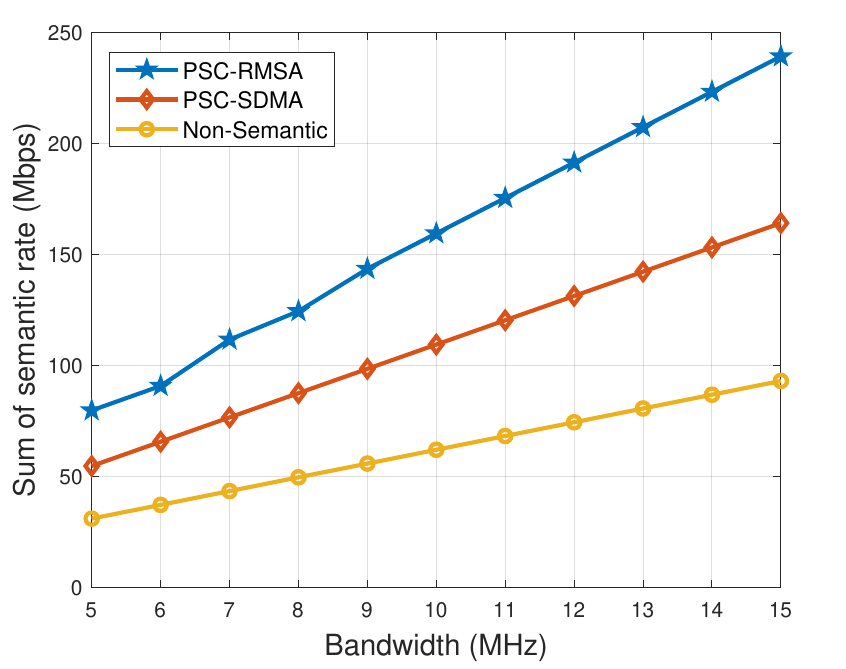}
\caption{Sum of semantic rate vs. bandwidth of the BS.}
\label{fg:B}
\end{figure}

\begin{figure}[t]
\centering
\includegraphics[width=\linewidth]{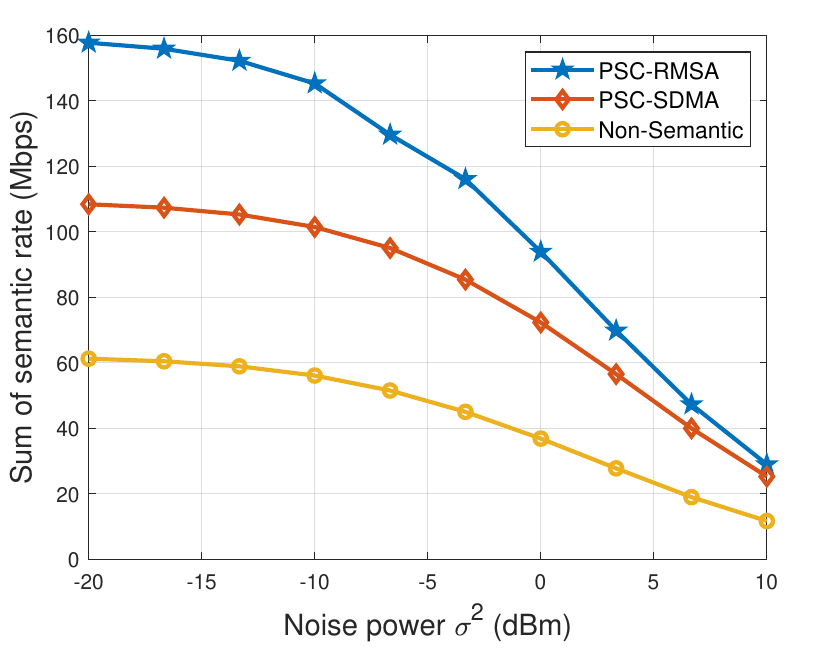}
\caption{Sum of semantic rate vs. noise power.}
\label{fg:np}
\end{figure}

Fig.~\ref{fg:B} illustrates the relationship between the sum of semantic rate and the bandwidth of the BS. It is evident that the sum of semantic rate increases for all schemes as the bandwidth increases. This observation can be attributed to the fact that a higher bandwidth provides more communication resources, resulting in a higher transmission rate. Consequently, the overall semantic rate improves. Fig.~\ref{fg:np} depicts the trend of the sum of semantic rate with respect to the noise power. As depicted in the figure, the sum of semantic rate decreases for all three schemes as the noise power increases, which aligns with intuition. When the noise power is large, the distinction between the three schemes becomes less pronounced. However, the advantages of the `PSC' schemes become more apparent when the noise power is relatively small.

\section{Conclusion}
This paper investigated the problem of joint transmission and computation resource allocation for the PSC network with RSMA. The model considers a BS that needs to transmit substantial knowledge graphs to multiple users, but the transmission is limited by communication resources. To address this, the BS utilizes semantic communication techniques to compress the data. This paper utilized shared probability graphs between the BS and users to enable semantic communication. The BS compresses the original data using the probability graph specific to each user and transmits the compressed data to the respective user. The user then conducts semantic inference to restore the original data. It is crucial to note that the process of semantic compression costs computation power at the BS, whose total power budget is limited. Therefore, it is important to balance the power consumption between transmission and computation. To achieve this, we formulated an optimization problem that aims to maximize the sum of semantic rates of all users under total power, semantic, and RSMA constraints. We proposed an iterative algorithm to obtain a suboptimal solution for this problem. The numerical results validate the effectiveness of our proposed scheme.

\bibliographystyle{IEEEtran}
\bibliography{main}


\end{document}